\documentclass[letterpaper, 10 pt,conference]{ieeeconf}  

\IEEEoverridecommandlockouts                              

 \overrideIEEEmargins                                      




\usepackage{graphicx}      
\graphicspath{{Figures/}}
\pdfminorversion=4
\usepackage{float,amsmath,amsfonts,xcolor,graphicx,stmaryrd}
\usepackage{mathtools, cuted}
\usepackage[ruled,vlined,linesnumbered]{algorithm2e}
\usepackage{algpseudocode}

\newtheorem{theorem}{Theorem}[section]
\newtheorem{lemma}{Lemma}[section]
\newtheorem{example}{Example}[section]

\newtheorem{definition}{Definition}[section]
\usepackage{xcolor}

\definecolor{wheat}{rgb}{0.96,0.87,0.70}

\DeclareMathOperator*{\argmin}{arg\,min}

\newif\ifconfidential

 \confidentialtrue
 
 \ifconfidential
\usepackage{fancyhdr}
\fancyhead{}
\fancyfoot{}

\setlength{\headheight}{12pt} 
\fi
\graphicspath{{Figures/}}









\title{\LARGE \bf
Robustness Measures and Monitors for Time Window Temporal Logic
}


\author{Ahmad Ahmad$^1$, Cristian-Ioan Vasile$^2$, Roberto Tron$^1$, and Calin Belta$^1$
\thanks{} 
\thanks{$^1$ Ahmad Ahmad, Roberto Tron, and Calin Belta ({\tt\small \{ahmadgh,tron,cbelta\}@bu.edu}) are with the Division of System Engineering, Boston University, Boston, MA 02215, USA. $^2$ Cristian Vasile ({\tt\small cvr519@lehigh.edu}) is with Mechanical Engineering
and Mechanics Department at Lehigh
University, Bethlehem, PA, 18015, USA }%
}


%

\begin{document}

\maketitle
\thispagestyle{empty}
\pagestyle{empty}

\begin{abstract}
Temporal logics (TLs) have been widely used to formalize interpretable tasks for cyber-physical systems. Time Window Temporal Logic (TWTL) has been recently proposed as a specification language for dynamical systems. In particular, it can easily express robotic tasks, and it allows for efficient, automata-based verification and synthesis of control policies for such systems. In this paper, we define two quantitative semantics for this logic, and two corresponding monitoring algorithms, which allow for real-time quantification of satisfaction of formulas by trajectories of discrete-time systems. We demonstrate the new semantics and their runtime monitors on numerical examples. 
\end{abstract}

\section{Introduction}

Temporal logics (TLs) \cite{baier2008principles} have been widely used to formulate high-level, expressive specifications for cyber-physical systems. Formal verification and synthesis algorithms have been employed to analyze and control such systems from TL specifications. In particular, Linear Temporal Logic (LTL) \cite{pnueli1977LTL} has been employed
to specify tasks for planning problems \cite{karaman16_challengesLTL,hadas2018synthesis_review,Yannis_abstractionFree,kantaros2020stylus} and for formal synthesis problems for discrete-time systems \cite{belta2017formal}. LTL formulas can be translated to automata, which can encode the progress towards task satisfaction. Automata-theoretic tools are typically used with finite abstractions of the system to produce policies that guarantee the satisfaction of tasks, or prove that they cannot be satisfied \cite{karaman16_challengesLTL,belta2017formal,Belta17_TLTL,Hadas21EventBSTL1}. Other approaches overcome some scalability issues by sampling-based planning algorithms guided by the specifications automaton, see \cite{Yannis_abstractionFree,kantaros2020stylus} where the authors use RRT$^\ast$ \cite{KaramanRRTstarIJRR} as the planning algorithm, and in \cite{Jana_scLTLRRG2021,Cristi2020RectiveRRG} the authors use RRG \cite{KaramanRRTstarIJRR} as the planning algorithm.

Signal Temporal Logic (STL)\cite{maler2004STLpaper}, Metric Temporal Logic (MTL)\cite{MTL1990}, and Time Window Temporal logic (TWTL)\cite{Cristi2017TWTL}, unlike LTL, can  express specifications with explicit, concrete-time constraints, e.g., \textit{Perform task $A$ between times $t_1$ and $t_2$ ; right after that, spend $t_5$ time units
between times $t_3$ and $t_4$ performing task $B$; and for all times do not perform task $C$}.  

The semantics of both STL and MTL are defined over real-time signals. They both have quantitative semantics, or robustness, which quantifies the degree of satisfaction of a formula by a signal \cite{donze2010robustSTL,fainekos2009MTLrobustness}. 
Most existing works that use STL and MTL for specifications find controllers by maximizing robustness, yielding runs of the system that robustly satisfy the specifications 
\cite{Belta16_QLwithSTL,Sadra_RobustSTL_MPC,fainekos2009MTLrobustness,CristiKaraman17_STL_RRTstar,Pavone2020STLCG}. The work in \cite{jana2020sampling} considers planning for syntactically co-safe LTL using RRT$^\ast$, in addition to the task specifications, other spatial requirements are expressed using fragment-STL where its robustness is used as the optimality criterion for RRT$^\ast$. In \cite{Lars21STLplnngMaxTrho_beforeAsyncTempRho}, the authors synthesize controllers for time-critical systems for which they quantify a temporal robustness measure that needs to be optimized. The traditional robustness metric is not differentiable and it is mostly determined by one value of the signal, i.e., it ``masks" most of the signal. These issues are addressed by the authors of \cite{cristi2019AGMstl}, who  introduced an arithmetic and geometric mean (AGM) robustness measure for STL.

TWTL has several advantages over STL, MTL, and other concrete-time TLs. First, its syntax and semantics can express serial tasks in an efficient and explicit way. This is important in many applications, especially in robotics  \cite{Cristi2017TWTL}. Second, TWTL formulae can be efficiently translated into automata. The complexity of the translation algorithm is independent of the formula time bounds \cite{Cristi2017TWTL}. This makes this logic suitable for automata-based synthesis and planning problems (see \cite{Cristi2020_TWTLrrt} for a planning application). TWTL, however, lacks quantitative semantics that measures the degree of satisfaction or violation of a formula. In this work, we modify the syntax of TWTL and allow it to be defined over predicated regions of the system output space. We define robustness measures to quantify the degree of satisfaction of TWTL formulae, and inspired by \cite{cristi2019AGMstl}, we extend the robustness definition to one in which we utilize the notion of AGM robustness. This enables planning and synthesis problems, which we plan to address in future follow-up work. 




Our contributions are summarized as follows. First, we adapt the ``traditional" 
quantitative semantics of STL to define a notion of sound robustness metric (Sec. \ref{subsec:TWTL_rho}). Second, inspired by the AGM-STL robustness \cite{cristi2019AGMstl}, we introduce an AGM robustness measure for TWTL, which is amenable for a wide spectrum of applications
(Sec. \ref{subsec:TWTL_eta}). Third, given partial runs of the system, i.e. runs with lengths less than the time horizon of a TWTL formula (see Definition \ref{def:phi_twtl_horizon}), we tailor the STL robustness interval semantics \cite{Seshia17_STLmtrng_robustSatInterv} to monitor the TWTL robustness (Sec. \ref{subsec:twtl_monitor}). Fourth, we introduce a similar interval sematics to monitor the AGM robustness at runtime. Finally, we validate the proposed robustness measures and their monitors in numerical examples (Sec. \ref{sec:example}).

\section{Preliminaries} 
\label{sec:prelims}

\subsection{Dynamical System}
\label{subsec:Dynmaics}

Consider a discrete-time nonlinear system in the form 
\begin{equation}\label{eq:aff_system}
	\begin{aligned}
		x_{t+\Delta t} &= f(x_t),\;t=t_0,t_0+\Delta t,t_0+2\Delta t\dots\\
            o_t &= l(x_t),
	\end{aligned}
\end{equation}
where  $x\in X\subset\mathbb{R}^d$ is the state taking values in a set $X$, $\mathbb{R}^d$ is the $d$-dimensional Euclidean space, $\Delta t\in\mathbb{R}_{>0}$, and $f:\mathbb{R}^d\rightarrow\mathbb{R}^d$. $o_t$ is an observable output of the system at time $t$, and $l(.):X\to 2^{\Pi}$ is a labeling function where $\Pi$ is a set of atomic propositions (tasks) and $2^{\Pi}$ is its power set. 

 A state trajectory of system (\ref{eq:aff_system}) is a sequence of states $\mathbf{x}:=x_{t_0}x_{t_0+\Delta t}x_{t_0+2\Delta t}\dots$ that satisfy its dynamics. In this work, an atomic proposition $\pi_A\in\Pi$ takes the Boolean value $\top$ at state $x\in X$ if $o:=l(x)\in A$ where $A:=\{o|h(o)>\sigma\}$, $h:2^\Pi\to\mathbb{R}^d$, $\sigma\in \mathbb{R}^d$, and $\bot$ otherwise. $\mathbf{x}$ generates a word, $\mathbf{o}=o_{t_0}o_{t_0+\Delta t}o_{t_0+2\Delta t}\dots\dots$. For $t_1,t_2\in\mathbb{R}_{\geq0};\;t_2:=t_1 + n_t \Delta t;\;$ where $n_t\in\mathbb{N}_{\geq0}$, we denote the corresponding system trajectory and the generated word, respectively, as the following, $\mathbf{x}_{t_1,t_2}:=x_{t_1}x_{t_1+\Delta t}\dots x_{t_2}$ and $\mathbf{o}_{t_1,t_2}:=o_{t_1}o_{t_1+\Delta t}\dots o_{t_2}$. For $t\in[t_1,t_2]$, $\mathbf{x}_{t_1,t_2}(t)=x_t$ and $\mathbf{o}_{t_1,t_2}(t):=o_t$.

\subsection{Time Window Temporal Logic}\label{subsec:TWTL}

We modify the TWTL syntax in \cite{Cristi2017TWTL} such that the atomic propositions are defined over predicated regions. The TWTL syntax is defined inductively  as follows:

\begin{equation}\label{eq:TWTL_syntax}
 \phi\;::=\;H^{d}\pi_{A}|H^{d}\neg \pi_A|\phi_1\wedge\phi_2|\phi_1\vee\phi_2|\neg\phi|\phi_1\cdot\phi_2|[\phi]^{[a,b]}   
\end{equation} where $\pi_{A}\in\Pi$ is an atomic proposition defined over the predicated region $A$; $\neg$, $\wedge$, and $\vee$ are the negation, conjunction, and disjunction Boolean operators, respectively; $H^d$ is the \textit{hold} operator; $\cdot$ is the \textit{concatenation} operator; and $[]^{[a,b]}$ is the \textit{within} operator, where $d,a,b\in\mathbb{Z}_{\geq0}$ and $a\geq b$.

The Boolean semantics over a word $\mathbf{o}_{t_1,t_2}$ is defined recursively as follows:

\begin{flalign}
\begin{aligned}\label{eq:twtl_boolean_sat}
    \mathbf{o}_{t_1,t_2}&\models H^{d}\pi_{A} \Leftrightarrow o_t\in A,\;\forall t\in[t_1,t_2]\wedge (t_2-t_1\geq d\Delta t) \\
    \mathbf{o}_{t_1,t_2}&\models H^{d}\neg\pi_{A} \Leftrightarrow o_t\notin A,\;\forall t\in[t_1,t_2]\wedge (t_2-t_1\geq d\Delta t)  \\
    \mathbf{o}_{t_1,t_2}&\models\phi_1\wedge\phi_2\Leftrightarrow(\mathbf{o}_{t_1,t_2}\models\phi_1)\wedge(\mathbf{o}_{t_1,t_2}\models\phi_2)   \\
     \mathbf{o}_{t_1,t_2}&\models\phi_1\vee\phi_2\Leftrightarrow(\mathbf{o}_{t_1,t_2}\models\phi_1)\vee(\mathbf{o}_{t_1,t_2}\models\phi_2)   \\
     \mathbf{o}_{t_1,t_2}&\models\neg\phi\Leftrightarrow\mathbf{o}_{t_1,t_2}\not\models\phi       \\
     \mathbf{o}_{t_1,t_2}&\models\phi_1\cdot\phi_2\Leftrightarrow\exists t=\argmin_{t\in[t_1,t_2]}\{\mathbf{o}_{t_1,t}\models\phi_1\}\\&\wedge(\mathbf{o}_{t+\Delta t,t_2}\models\phi_2)\\
     \mathbf{o}_{t_1,t_2}&\models[\phi]^{[a,b]}\Leftrightarrow\exists t\geq t_1+a \text{ s.t. } \mathbf{o}_{t,t_1+b}\models\phi\\&\wedge(t_2-t_1\geq b)
\end{aligned}
\end{flalign}
\begin{definition} [TWTL Time Horizon \cite{Cristi2017TWTL}]\label{def:phi_twtl_horizon}
Given $\phi$, the time horizon is defined recursively as follows. 
\begin{equation}
    ||\phi||:=\begin{cases}
        \max(||\phi_1||,||\phi_2||); & \text{if }\phi\in\{\phi_1\wedge\phi_2,\phi_1\vee\phi_2\}\\
        ||\phi_1||;& \text{if } \phi = \neg \phi_1\\
        ||\phi_1||+||\phi_2||+\Delta t;&\text{if } \phi = \phi_1\cdot\phi_2\\
        d\Delta t;& \text{if }\phi=H^d\pi_{A}\\
        b;&\text{if }\phi = [\phi_1]^{[a,b]}
    \end{cases}
\end{equation}
\end{definition}
 \section{TWTL Quantitative Semantics}\label{sec:twtl_rho_eta}
 
Inspired by STL robustness \cite{donze2010robustSTL} and its AGM version \cite{cristi2019AGMstl}, we tailor robustness measures for TWTL formulae to reason about the degree of satisfaction of TWTL tasks.
\subsection{TWTL Robustness}\label{subsec:TWTL_rho}
\begin{definition}(TWTL Robustness)\label{def:traditional_robustness}
Given a TWTL formula $\phi$ and an output word $\mathbf{o}_{t_1,t_2}$ of system (\ref{eq:aff_system}), we define the robustness degree $\rho(\mathbf{o}_{t_1,t_2},\phi)$ at time $0$, recursively, as follows: 
\begin{flalign}
\begin{aligned}\label{eq:TWTL_robustness}
    & \rho(\mathbf{o}_{t_1,t_2},H^{d}\pi_{A}) :=\begin{cases}
    \min\limits_{t\in[t_1,d+t_1]}h(o_t) & ; (t_2-t_1\geq d\Delta t)         \\
    \rho_{\bot} & ;\text{otherwise}
    \end{cases}\\
    &\rho(\mathbf{o}_{t_1,t_2},\phi_1\wedge\phi_2):=\min\{ \rho(\mathbf{o}_{t_1,t_2},\phi_1),\rho(\mathbf{o}_{t_1,t_2},\phi_2)\}          \\
     &\rho(\mathbf{o}_{t_1,t_2},\phi_1\vee\phi_2):=\max\{ \rho(\mathbf{o}_{t_1,t_2},\phi_1),\rho(\mathbf{o}_{t_1,t_2},\phi_2)\}          \\
     &\rho(\mathbf{o}_{t_1,t_2},\neg \phi)=-\rho(\mathbf{o}_{t_1,t_2},\phi)\\
     & \rho(\mathbf{o}_{t_1,t_2},\phi_1\cdot\phi_2) := \\& \quad \max_{t\in [t_1, t_2)} \left\{\min\{\rho(\mathbf{o}_{t_1,t},\phi_1), \rho(\mathbf{o}_{t+1,t_2},\phi_2) \} \right\}\\
    &\rho(\mathbf{o}_{t_1,t_2},[\phi]^{[a,b]}) :=\begin{cases}
    \max\limits_{t\geq t_1+a}\{\rho(\mathbf{o}_{t,t_1+b},\phi)\};(t_2-t_1\geq b)
    \\
    \rho_{\bot};\; \text{otherwise}
    \end{cases}
\end{aligned}
\end{flalign}
where $\rho_{\bot}$ denotes a large negative value that indicates the robustness of Boolean $\bot$.     
\end{definition}
For $\mathbf{o}_{t_1,t_2}$ and $\phi$, the robustness value $\rho(\mathbf{o}_{t+1,t_2},\phi)$ indicates how far is $\mathbf{o}_{t_1,t_2}$ from the decision boundary of the predicated region of the task. A positive $\rho(\mathbf{o}_{t+1,t_2},\phi)$ implies the Boolean satisfaction of the task, where the greater the value the more robustly $\rho(\mathbf{o}_{t+1,t_2},\phi)$ satisfies the task. A similar argument can be made for negative robustness for violation of a task.   
\begin{lemma}\label{lema:soundness_rho}
TWTL robustness (\ref{eq:TWTL_robustness}) is sound, i.e., the Boolean satisfaction (violation) is implied by a positive (negative) robustness value.  
\end{lemma}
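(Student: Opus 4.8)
The plan is to prove the two implications---positive robustness implies Boolean satisfaction, and negative robustness implies Boolean violation---\emph{simultaneously} by structural induction on the TWTL formula $\phi$, mirroring the recursive structure of the robustness in (\ref{eq:TWTL_robustness}) and of the Boolean semantics in (\ref{eq:twtl_boolean_sat}). The two implications must be carried together, since the negation case turns one into the other. For the base case $\phi = H^d\pi_A$, I would first dispatch the time-bound guard: whenever $t_2 - t_1 < d\Delta t$ the robustness equals $\rho_{\bot}<0$ and the Boolean clause is false, so the negative-implies-violation direction holds and the positive direction is vacuous. When the guard holds, $\rho = \min_{t\in[t_1,d+t_1]} h(o_t)$; reading the predicate membership $o_t\in A$ through the sign of $h$, a value $\rho>0$ forces $h(o_t)>0$, hence $o_t\in A$ for every $t$ in the hold window, giving satisfaction, while $\rho<0$ forces some $o_t\notin A$, giving violation. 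The case $H^d\neg\pi_A$ is identical with the inequality reversed.

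For the propositional operators the argument is the standard $\min$/$\max$ duality. For $\phi_1\wedge\phi_2$ with $\rho=\min\{\rho_1,\rho_2\}$: if $\rho>0$ then both $\rho_i>0$ and the inductive hypothesis gives satisfaction of both conjuncts; if $\rho<0$ then some $\rho_i<0$ and that conjunct is violated. The disjunction case is symmetric with $\max$. For $\neg\phi$ the identity $\rho(\neg\phi)=-\rho(\phi)$ converts a positive (negative) value into a negative (positive) value for $\phi$, and the two-sided inductive hypothesis then flips satisfaction into violation and back---exactly why both directions are proved together. The within operator $[\phi]^{[a,b]}$ behaves like a bounded eventually: modulo the same $\rho_{\bot}$ guard on $t_2-t_1\ge b$, the robustness is a maximum of $\rho(\mathbf{o}_{t,t_1+b},\phi)$ over admissible $t$, so a positive value exhibits a witness $t$ at which, by induction, $\phi$ holds (matching the existential Boolean clause), while a negative value makes every candidate subword violate $\phi$ and precludes any witness.

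The delicate case, and the one I expect to be the main obstacle, is concatenation $\phi_1\cdot\phi_2$. The robustness in (\ref{eq:TWTL_robustness}) maximizes $\min\{\rho(\mathbf{o}_{t_1,t},\phi_1),\rho(\mathbf{o}_{t+1,t_2},\phi_2)\}$ over all split points $t$, whereas the Boolean clause in (\ref{eq:twtl_boolean_sat}) commits to the earliest split $t^\star=\argmin_t\{\mathbf{o}_{t_1,t}\models\phi_1\}$ and then tests $\phi_2$ on the remaining suffix. For positive robustness I would extract a maximizing split $t'$, obtain $\mathbf{o}_{t_1,t'}\models\phi_1$ and $\mathbf{o}_{t'+1,t_2}\models\phi_2$ from the inductive hypothesis, and then reconcile the witness $t'$ with the earliest-satisfaction split $t^\star\le t'$ mandated by the semantics. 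Closing this gap is the technical heart of the proof: because soundness by itself controls only the strictly-signed cases and not the behaviour at the decision boundary, one cannot directly read off the sign of the robustness at $t^\star$ from the Boolean status of $\phi_1$ there. I therefore anticipate needing an auxiliary prefix/monotonicity property of $\phi_1$-satisfaction---or a reformulation showing that the concatenation robustness is attained exactly at $t^\star$---so that the optimal split and the earliest valid split can be identified.

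The negative direction of concatenation is analogous: $\rho<0$ forces every split to fail in at least one factor (either $\rho(\mathbf{o}_{t_1,t},\phi_1)<0$ or $\rho(\mathbf{o}_{t+1,t_2},\phi_2)<0$), and the remaining task is again to transfer this statement to the distinguished split $t^\star$, with the degenerate subcase where no prefix satisfies $\phi_1$ absorbed into the $\rho_{\bot}$ convention. I would keep the $\rho_{\bot}$ bookkeeping explicit throughout, so that every time-horizon guard failure (the ``otherwise'' branches in (\ref{eq:TWTL_robustness})) is uniformly handled by the negative-implies-violation branch, and so that the induction closes cleanly once the concatenation reconciliation is established.
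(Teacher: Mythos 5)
Your induction scheme, the simultaneous treatment of both implications (forced by the negation case), the handling of the time-horizon guards via $\rho_{\bot}$, and the arguments for \emph{hold}, \emph{conjunction}, \emph{disjunction}, \emph{negation}, and \emph{within} all coincide with the paper's proof (the paper phrases the \emph{hold} and \emph{conjunction} cases as contradictions, but the content is identical to your direct sign-chasing).

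The one genuine gap is exactly where you predicted it: the concatenation case is left open in your proposal. You correctly observe that the robustness in (\ref{eq:TWTL_robustness}) maximizes over \emph{all} split points while the Boolean clause in (\ref{eq:twtl_boolean_sat}) commits to the earliest split $t^{\star}$, and that soundness of the subformulas alone does not let you transfer the sign of the robustness from the maximizing split to $t^{\star}$. The paper closes this by the ``reformulation'' route you mention as one of your two options: it invokes the standing assumption that TWTL languages are unambiguous (from \cite{Cristi2017TWTL}) to replace the earliest-split Boolean semantics with the disjunction over all splits given in (\ref{eq:disjConjunCat}), i.e., $\mathbf{o}_{t_1,t_2}\models\phi_1\cdot\phi_2$ iff some split $t$ satisfies both factors. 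Under that reformulation the $\max$-of-$\min$ robustness is literally the robustness of a disjunction of conjunctions, and soundness is inherited from the already-proved cases with no reconciliation needed. So the missing idea is not an auxiliary prefix-monotonicity lemma but an appeal to unambiguity to justify that the earliest-split and any-split semantics define the same satisfaction relation; without stating that assumption (or proving the equivalence of the two semantics), your concatenation case does not close, and with it the rest of your argument goes through as written.
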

\begin{proof}
We prove the soundness by structural induction over the formula $\phi$. The base case for TWTL is the \textit{Hold} case, which we prove as follows. 
    
    \noindent \textit{Hold:} First, if $t_2-t_1<d\Delta t$ then $\rho(\mathbf{o}_{t_1,t_2},H^d\pi_{A})=\rho_\bot<0$ which implies that $\mathbf{o}_{t_1,t_2}\not\models\phi$. Second, if $t_2-t_1\geq d\Delta t$, let $\rho(\mathbf{o}_{t_1,t_2},H^d\pi_{A})>0$. Assume that there is $t^\prime\in[t_1,t_1+d\Delta t]$ such that $\rho(\mathbf{o}_{t^\prime},\pi_{A})<0$ then we get $\rho(\mathbf{o}_{t_1,t_2},H^d\pi_{A})=\min\limits_{t\in[t_1,t_1+d\Delta t]}h(o_t)<0$, which is a contradiction, then $\mathbf{o}_{t_1,t_2}\models H^d\pi_A$. Third, if $t_2-t_1\geq d\Delta t$, let $\rho(\mathbf{o}_{t_1,t_2},H^d\pi_{A})<0$. Assume that for all $t^\prime\in[t_1,t_1+d\Delta t]$ such that $\rho(\mathbf{o}_{t^\prime},\pi_{A})>0$ then $\rho(\mathbf{o}_{t_1,t_2},H^d\pi_{A})=\min\limits_{t\in[t_1,t_1+\Delta t]}h(o_t)>0$, which is a contradiction, then $\mathbf{o}_{t_1,t_2}\not\models H^d\pi_A$.  
    
    Then we have the following induction cases: 
    
    \noindent \textit{Conjunction:} Let $\rho(\mathbf{o}_{t_1,t_2},\phi_1\wedge\phi_2)>0$. Assume that one or both $\rho(\mathbf{o}_{t_1,t_2},\phi_i)<0$, $i=1,2$, then $\rho(\mathbf{o}_{t_1,t_2},\phi_1\wedge\phi_2)=\min\{ \rho(\mathbf{o}_{t_1,t_2},\phi_1),\rho(\mathbf{o}_{t_1,t_2},\phi_2)\}<0$, which is a contradiction. Then by induction $\mathbf{o}_{t_1,t_2}\models\phi_1\wedge\phi_2$. In the case when $\rho(\mathbf{o}_{t_1,t_2},\phi_1\wedge\phi_2)<0$. Assume that $\rho(\mathbf{o}_{t_1,t_2},\phi_i)>0$, $i=1,2$, then $\rho(\mathbf{o}_{t_1,t_2},\phi_1\wedge\phi_2)=\min\{ \rho(\mathbf{o}_{t_1,t_2},\phi_1),\rho(\mathbf{o}_{t_1,t_2},\phi_2)\}>0$, which is a contradiction. Then by induction $\mathbf{o}_{t_1,t_2}\not\models\phi_1\wedge\phi_2$.   
    
    \noindent \textit{Disjunction:} Follows similarly to the \textit{conjunction} case. 
    
    \noindent \textit{Negation:} Let $\phi=\neg\varphi$ and $\rho(\mathbf{o}_{t_1,t_2},\phi)>0$. We get  $\rho(\mathbf{o}_{t_1,t_2},\varphi)<0$, and by induction hypothesis $\mathbf{o}_{t_1,t_2}\not\models\varphi$; hence, $\mathbf{o}_{t_1,t_2}\models\phi$. Similarly, if $\rho(\mathbf{o}_{t_1,t_2},\phi)>0$ we get $\mathbf{o}_{t_1,t_2}\models\varphi$; hence, $\mathbf{o}_{t_1,t_2}\not\models\phi$.    
    
    \noindent \textit{Within:} Similar to the \textit{hold} case. 
    
    \noindent \textit{Concatenation:} Given that the language of TWTL is assumed to be unambiguous \cite{Cristi2017TWTL}, we modify the syntax of the concatenation operator as follows. 
    \begin{equation}\label{eq:disjConjunCat}
    \mathbf{o}_{t_1,t_2} \models\phi_1\cdot\phi_2\Leftrightarrow\bigvee\limits_{t\in[t_1,t_2]}\mathbf{o}_{t_1,t}\models\phi_1\wedge\mathbf{o}_{t+\Delta t,t_2}\models\phi_2
    \end{equation}
    Thus, the soundness of $\rho(\mathbf{o}_{t_1,t_2},\phi_1\cdot\phi_2)$ follows directly from the soundness of the robustness of the \textit{disjunction} and \textit{conjunction} cases. 
\end{proof}
\begin{example}\label{eg:rho}
Consider a TWTL formula $\phi= [H^6 \pi_A]^{[0,10]}$, where $A=\{o \mid o\geq4\}$, which reads as: ``\textit{Within} time $0$ and time $10$, \textit{hold} in $\pi_A$ for $6$ time steps"; and three output words $\mathbf{o}_1$, $\mathbf{o}_2$ and $\mathbf{o}_3$, which are depicted as blue, green, and red traces in the top-left figure of Fig. \ref{fig:allmntringrhoResults}, respectively. One can see that $\mathbf{o}_1$ and $\mathbf{o}_2$ satisfy the task specification where $\rho(\mathbf{o}_1,\phi)=\rho(\mathbf{o}_2,\phi)=0.099$, whereas $\mathbf{o}_2$ violates the task, where $\rho(\mathbf{o}_2,\phi)=-2$.   
\end{example}
\subsection{TWTL Arithmetic and Geometric Mean Robustness }
\label{subsec:TWTL_eta}
TWTL robustness (Definition \ref{def:traditional_robustness}) accounts for the most critical points of the system output word, which is necessary for the soundness of the robustness (see Lemma \ref{lema:soundness_rho}).
However, it may be very pessimistic robustness measure as highlighted in Example~\ref{eg:eta}.
for instance, the computation of the robustness $\rho(\mathbf{o}_{t_1,t_2},H^d \pi_A)$ is dominated by the minimum valuation of the predicate function $h(\cdot)$ over the system word $\mathbf{o}$.
Moreover, since its computation involves $\mathrm{min}$ and $\mathrm{max}$, it leads to a non-smooth measure
which is computationally challenging for heuristic- and gradient-based approaches to maximizes the robustness of the overall task.  

To this end, we tailor the notion of AGM robustness \cite{cristi2019AGMstl} to define the AGM robustness measure for TWTL $\eta$. As we show in the following, $\eta$ helps mitigate some of the shortcomings of TWTL robustness $\rho$
and provides a more optimistic, smooth and sound robustness measure for TWTL.

Consider the function $F:\mathbb{R}\to\mathbb{R}$, and let $[F]_{+}:=\begin{cases}F;\;F>0\\0;\; \text{otherwise}\end{cases}$ and $[F]_\_=-[-F]_{+}$, where $F=[F]_{+}+[F]_\_$.
We define AGM functions of disjunction and conjunction of $r_i\in\mathbb R$, $i=1,\dots,N$, respectively, as follows.  
\begin{flalign}
\begin{aligned}\label{eq:AGM_dis}
        &\mathrm{AGM}_{\vee}(r_1,\dots,r_N):= \begin{cases}
                -\sqrt[N]{\prod\limits_{i=1}^N(1-r_i)}+1;\\ \ \ \text{if } \forall i\in\{1,\ldots,N\}, r_i < 0\\\frac{1}{N}\sum\limits_{i=1}^N [r_i]_+;\ \ \text{otherwise }
        \end{cases}\\
    \end{aligned}
\end{flalign}
\begin{flalign}
\begin{aligned}\label{eq:AGM_con}
        &\mathrm{AGM}_{\wedge}(r_1,\dots,r_N):= \begin{cases}
            \sqrt[N]{\prod\limits_{i=1}^N(1+r_i)}-1;\\ \ \ \text{if } \forall i\in\{1,\ldots,N\}, r_i > 0\\
                \frac{1}{N}\sum\limits_{i=1}^N [r_i]_\_;\ \ \text{otherwise}
        \end{cases}\\
    \end{aligned}
\end{flalign}
\begin{definition}[Normalized TWTL formulae]
    Given syntax~\eqref{eq:TWTL_syntax}, a normalized TWTL formula $\phi$ is presented as the formulae defined over normalized atomic propositions $\pi_{A_{\mathrm{norm}}}$, where
    $A_{\mathrm{norm}}:=\{o \mid h_{\mathrm{norm}}(o)>\sigma_n\}$, $h_{\mathrm{norm}}:2^\Pi\to[-1,1]^{d}$, and $\sigma_{\mathrm{norm}}\in [-1,1]$. 
\end{definition}

Throughout the rest of the paper, we assume that all TWTL formulae are normalized unless explicitly stated otherwise.

\begin{definition}(TWTL Arithmetic-Geometric Mean Robustness)\label{def:AGM_robustness}
Given a normalized TWTL formula $\phi$, we define the AGM robustness of the output word $\mathbf{o}_{t_1,t_2}$ with respect to $\phi$, recursively, using~\eqref{eq:AGM_dis} and~\eqref{eq:AGM_con}.
\end{definition}
\begin{flalign}
\begin{aligned}\label{eq:TWTL_AGMrobustness1}
     &\eta(\mathbf{o}_{t_1,t_2},\top):=+1\\
    &\eta(\mathbf{o}_{t_1,t_2},\bot):=-1\\
    &\eta(\mathbf{o}_{t},\pi_{A}):=\frac{1}{2}(h(\mathbf{o}_t)-\sigma_n)\\&\eta(\mathbf{o}_{t_1,t_2},\phi_1\wedge\phi_2):=\mathrm{AGM}_{\wedge}(\eta(\mathbf{o}_{t_1,t_2},\phi_1),\eta(\mathbf{o}_{t_1,t_2},\phi_2))\\
    &\eta(\mathbf{o}_{t_1,t_2},\phi_1\vee\phi_1):=\mathrm{AGM}_{\vee}(\eta(\mathbf{o}_{t_1,t_2},\phi_1),\eta(\mathbf{o}_{t_1,t_2},\phi_2))\\
    &\eta(\mathbf{o}_{t_1,t_2},\neg\phi):=-\eta(\mathbf{o}_{t_1,t_2},\phi)
\end{aligned}
\end{flalign}
\begin{strip}
\begin{flalign}
\begin{aligned}\label{eq:TWTL_AGMrobustness2}
    &\eta(\mathbf{o}_{t_1,t_2},H^d\pi_{A}):=\begin{cases}
        \mathrm{AGM}_{\wedge}(\eta(\mathbf{o}_{t},\pi_{A})|t\in[t_1,t_1+d\Delta t]);\text{ if }(t_2-t_1)\geq d\\
        -1;\text{ otherwise}
    \end{cases}
    \\& \eta(\mathbf{o}_{t_1,t_2},[\phi]^{[a,b]}]):=\begin{cases}
        \mathrm{AGM}_{\vee}(\eta(\mathbf{o}_{t,t_1+b},\phi)|t\in[t_1+a,t_1+b]);\text{ if } (t_2-t_1)\geq b\\ 
        -1; \text{ otherwise}
    \end{cases} 
    \\& \eta(\mathbf{o}_{t_1,t_2},\phi_1.\phi_2):= \mathrm{AGM}_{\vee}(\mathrm{AGM}_{\wedge}(\eta(\mathbf{o}_{t_1,t},\phi_1),\eta(\mathbf{o}_{t+\Delta t,t_2},\phi_2))|t\in [t_1,t_2))
\end{aligned}
\end{flalign}
\end{strip}

\begin{theorem}\label{thm:AGMrobustness_soundness}
TWTL AGM robustness, Definition \ref{def:AGM_robustness}, is sound.
Formally, we have
\begin{equation*}
\begin{aligned}
    \eta(\mathbf{o}_{t_1,t_2},\phi)>0\Leftrightarrow\rho(\mathbf{o}_{t_1,t_2},\phi)>0 \implies\mathbf{o}_{t_1,t_2}\models\phi\\
    \eta(\mathbf{o}_{t_1,t_2},\phi)<0\Leftrightarrow\rho(\mathbf{o}_{t_1,t_2},\phi)<0 \implies\mathbf{o}_{t_1,t_2}\not\models\phi
\end{aligned}  
\end{equation*}
\end{theorem}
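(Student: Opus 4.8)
The plan is to prove Theorem~\ref{thm:AGMrobustness_soundness} by structural induction on $\phi$, exactly mirroring the inductive skeleton already established for $\rho$ in Lemma~\ref{lema:soundness_rho}. The statement actually bundles two claims: an agreement-of-signs claim, $\eta(\mathbf{o}_{t_1,t_2},\phi)>0 \Leftrightarrow \rho(\mathbf{o}_{t_1,t_2},\phi)>0$ (and likewise for $<0$), and a soundness implication, which for $\rho$ is already given by Lemma~\ref{lema:soundness_rho}. So the real work is to show that $\eta$ and $\rho$ share signs on every output word; once that is done, the implication to Boolean (non)satisfaction comes for free by composing with Lemma~\ref{lema:soundness_rho}. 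I would state this reduction explicitly at the start so that the rest of the argument only tracks the sign of $\eta$ against the sign of $\rho$.

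For the base cases and operators I would lean on two key algebraic facts about the AGM operators, which I would isolate as a preliminary observation before the induction. First, for the predicate case, $\eta(\mathbf{o}_t,\pi_A)=\tfrac12(h(\mathbf{o}_t)-\sigma_n)$ has the same sign as $h(\mathbf{o}_t)-\sigma_n$, which is precisely the sign that determines $o_t\in A_{\mathrm{norm}}$ and matches the sign convention of $h$ used in $\rho$. Second, and crucially, $\mathrm{AGM}_\wedge(r_1,\dots,r_N)>0$ iff every $r_i>0$, and $\mathrm{AGM}_\wedge(r_1,\dots,r_N)<0$ iff some $r_i<0$; symmetrically $\mathrm{AGM}_\vee(r_1,\dots,r_N)>0$ iff some $r_i>0$ and $<0$ iff all $r_i<0$. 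These sign characterizations follow directly from the two-branch definitions in~\eqref{eq:AGM_dis} and~\eqref{eq:AGM_con}: in the all-positive branch the geometric term $\sqrt[N]{\prod(1+r_i)}-1$ is positive exactly when each factor exceeds $1$, and in the mixed branch the averaged $[r_i]_\_$ term is negative exactly when some $r_i$ is negative. The upshot is that $\mathrm{AGM}_\wedge$ replicates the sign behavior of $\min$ and $\mathrm{AGM}_\vee$ replicates that of $\max$, which is exactly the correspondence needed because $\rho$ uses $\min/\max$ in the same structural positions where $\eta$ uses $\mathrm{AGM}_\wedge/\mathrm{AGM}_\vee$.

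With that lemma in hand the induction is mechanical and position-by-position: \emph{Hold} and the guarded \emph{Within} cases set both measures to a negative sentinel ($-1$ for $\eta$, $\rho_\bot$ for $\rho$) when the duration guard fails, so signs agree trivially there; when the guard holds, $\eta$'s $\mathrm{AGM}_\wedge$ over $t\in[t_1,t_1+d\Delta t]$ is positive iff all per-time predicate robustnesses are positive, matching $\rho$'s $\min$ over the same window, and similarly the \emph{Within} $\mathrm{AGM}_\vee$ matches the $\max$. \emph{Conjunction}, \emph{disjunction}, and \emph{negation} follow immediately from the sign lemma and the inductive hypothesis. For \emph{concatenation} I would reuse the reformulation~\eqref{eq:disjConjunCat} from the proof of Lemma~\ref{lema:soundness_rho}: $\eta$ for $\phi_1\cdot\phi_2$ is $\mathrm{AGM}_\vee$ over splits of $\mathrm{AGM}_\wedge(\eta(\mathbf{o}_{t_1,t},\phi_1),\eta(\mathbf{o}_{t+\Delta t,t_2},\phi_2))$, which by the sign lemma and the induction hypothesis has the same sign as $\rho$'s $\max$-over-splits of $\min$, and the unambiguity assumption on the TWTL language guarantees at most one split satisfies $\phi_1\cdot\phi_2$, so the existential/disjunctive structure is faithfully captured.

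The main obstacle I anticipate is tightening the claim at the boundary, i.e., ruling out the degenerate case $\eta=0$ or $\rho=0$ so that the strict biconditionals $>0 \Leftrightarrow >0$ and $<0 \Leftrightarrow <0$ are genuinely exhaustive and mutually consistent. The sign lemma as stated covers strictly positive and strictly negative inputs cleanly, but I would need to check what happens when some intermediate $r_i$ equals exactly $0$ (for instance $h(\mathbf{o}_t)=\sigma_n$ at the decision boundary), since then neither branch of the AGM definitions is, strictly speaking, selected, and $\rho$ and $\eta$ could disagree on which side of zero they fall. I expect to handle this either by assuming, as is standard in robustness-semantics proofs and implicitly in Example~\ref{eg:rho}, that the word lies strictly off every decision boundary (a generic, measure-one condition), or by adopting the convention that boundary values are resolved consistently for both measures; I would flag this assumption explicitly rather than let it hide inside the $\mathrm{AGM}$ case split.
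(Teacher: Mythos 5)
Your proof is correct, and it follows the same structural-induction skeleton as the paper's, but it is organized around a different pivot. The paper proves each case by contradiction directly against the Boolean semantics: for \emph{Hold} it assumes $\eta>0$ together with some $\eta(\mathbf{o}_{t'},\pi_A)<0$ and derives a sign contradiction from the ``otherwise'' branch of $\mathrm{AGM}_\wedge$ (and symmetrically for $\eta<0$ via the all-positive branch), handles \emph{Within} analogously, defers $\top$, $\bot$, $\pi_A$, conjunction, disjunction, and negation to Theorem~2 of the cited AGM-STL paper, and handles concatenation via the reformulation~\eqref{eq:disjConjunCat}. You instead isolate a sign-characterization lemma ($\mathrm{AGM}_\wedge>0$ iff all arguments are positive, $<0$ iff some argument is negative, and dually for $\mathrm{AGM}_\vee$), conclude that $\eta$ and $\rho$ agree in sign position-by-position because $\eta$ uses $\mathrm{AGM}_\wedge/\mathrm{AGM}_\vee$ exactly where $\rho$ uses $\min/\max$, and then discharge the Boolean implication by composing with Lemma~\ref{lema:soundness_rho}. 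Your route buys two things the paper's proof does not explicitly deliver: it actually establishes the biconditional $\eta>0\Leftrightarrow\rho>0$ that appears in the theorem statement (the paper's proof only argues the implications to Boolean satisfaction), and the sign lemma is reusable across all operator cases rather than re-derived per case. One small correction to your anticipated obstacle: the boundary case is less delicate than you fear, since the AGM definitions' ``otherwise'' branches do cover arguments equal to zero, and one checks that $\mathrm{AGM}_\wedge$ vanishes exactly when $\min$ does (all arguments nonnegative, at least one zero) and dually for $\mathrm{AGM}_\vee$ and $\max$; moreover the theorem only draws conclusions from strict signs, so no genericity assumption on the word is needed. The only residual mismatch is at the predicate level, where $\rho$'s \emph{Hold} case uses $h(o_t)$ while $\eta$ uses $\tfrac12(h(o_t)-\sigma_n)$; these agree in sign only under the normalization convention ($\sigma$ effectively zero) that the paper itself leaves implicit, so your explicit flag there is warranted but is an issue inherited from the definitions, not a gap in your argument.
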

\begin{proof}
We prove the soundness by structural induction over the formula $\phi$. The base case for TWTL is the \textit{Hold} case, which we prove as follows.

\noindent \textit{Hold:} First, if $t_2-t_1<d\Delta t$ then $\eta(\mathbf{o}_{t_1,t_2},H^d\pi_{A})=-1$ which implies that $\mathbf{o}_{t_1,t_2}\not\models\phi$. Second, if $t_2-t_1\geq d\Delta t$, let $\eta(\mathbf{o}_{t_1,t_2},H^d\pi_{A})>0$. Assume that there is $t^\prime\in[t_1,t_1+d\Delta t]$ such that $\eta(\mathbf{o}_{t^\prime},\pi_{A})<0$ then by (\ref{eq:AGM_con}) we get $\eta(\mathbf{o}_{t_1,t_2},H^d\pi_{A})=\frac{1}{|[t_1,t_2]|}\sum\limits_{t^\prime\in[t_1,t_1+d\Delta t]}[\eta(\mathbf{o}_{t^\prime},\pi_{A})]_\_<0$, which is a contradiction, then $\mathbf{o}_{t_1,t_2}\models H^d\pi_A$. Third, if $t_2-t_1\geq d$, let $\eta(\mathbf{o}_{t_1,t_2},H^d\pi_{A})<0$. Assume that for all $t^\prime\in[t_1,t_1+d\Delta t]$ such that $\eta(\mathbf{o}_{t^\prime},\pi_{A})>0$ then by (\ref{eq:AGM_dis}) we get $\eta(\mathbf{o}_{t_1,t_2},H^d\pi_{A})={}^{|[t_1,t_1+d\Delta t]|}\sqrt{\prod\limits_{t^\prime\in[t_1,t_1+d\Delta t]}(1+\eta(\mathbf{o}_t,\pi_A)}-1>0$, which is a contradiction, then $\mathbf{o}_{t_1,t_2}\not\models H^d\pi_A$.  

    Then we have the following induction cases: 
    
    \noindent The soundness for the cases $\top$, $\bot$ and $\pi_A$, and the \textit{conjunction}, \textit{disjunction}, and \textit{negation} cases, follow directly from Theorem 2 in \cite{cristi2019AGMstl}. 
    
    \noindent \textit{Within:} Follows similarly to the \textit{hold} case. 
    
    \noindent \textit{Concatenation:} Given the Boolean semantic (\ref{eq:disjConjunCat}), the soundness of $\eta(\mathbf{o}_{t_1,t_2},\phi_1\cdot\phi_2)$ follows directly from the soundness of the AGM robustness of the \textit{disjunction} and \textit{conjunction} cases.
\end{proof}

\begin{example}(Continued)\label{eg:eta}
    Consider the same formula and words of Example \ref{eg:rho}. $\eta(\mathbf{o}_1,\phi)=0.061$, $\eta(\mathbf{o}_2,\phi)=0.010$, where, unlike their $\rho$ value, the AGM robustness measure $\eta$ rewards words with more satisfying valuations instead of being dominated by the most critical valuations while also preserving the soundness property. The word $\mathbf{o}_1$ (the blue trace in top-left figure of Fig. \ref{fig:allmntringrhoResults}) has more valuations that robustly contribute to satisfying  the formula.
    In $\rho(\mathbf{o}_1,\phi)$, the  $4^{\mathrm{th}}$ point of the trace, which is close to lead to violating the task, dominates the robustness computation. Even if we assume that the $4^{\mathrm{th}}$ is $2$ (which would lead to violating the task), its $\eta$ value would be $-0.35$ that is higher than its corresponding $\rho$ value, $-2$. 
    The computation of $\eta$ considers the fact that the trace has promising valuations which contribute to satisfying the task.
    For $\mathbf{o}_3$ (the red trace in the same figure), on the other hand, $\eta(\mathbf{o}_3,\phi)=-0.61$ which is higher than $\rho(\mathbf{o}_3,\phi)=-2$, is more realistic violation measure given that some valuations of $\mathbf{o}_3$ are close to contributing in satisfying the task.        
\end{example}

\section{Runtime Robustness Monitoring}\label{subsec:twtl_monitor}

 Considering runs of the system with time horizon less than TL specifications time horizon, runtime verification techniques are introduced as light weight algorithms to monitor the satisfaction given such partial runs, see \cite{jyo18SurveyMntring} for review on monitoring different TLs. Different techniques are usually utilized for the monitoring task, the work in \cite{b2022runtimeMonitForTWTL} uses a rewriting technique to monitor the Boolean satisfaction of TWTL. In \cite{Seshia17_STLmtrng_robustSatInterv}, the authors introduced interval semantics to monitor the robustness degree of STL specifications. The technique considers partial runs, and with the set of all possible completions  of the run, it computes the best and worst possible robustness.

 In this work, we tailor the STL robustness monitor from \cite{Seshia17_STLmtrng_robustSatInterv} to encode a robustness interval $[\rho]$ to monitor TWTL robustness.
 Similarly, we introduce an AGM robustness interval $[\eta]$ to monitor the AGM robustness. Our monitors are sound, which means the correct (AGM) robustness belongs to the produced interval of the runtime monitor at any time step.   
 
Let us consider some preliminary definitions that we use in our TWTL interval semantics. 
\begin{definition}[Prefix, Completions]
    Consider the time horizon $||\phi||$ and output words $\mathbf{o}_{t_1,t^\prime}$ and $\mathbf{o}_{t_1,t_2}$, where $t^\prime<||\phi||$ and $t_2\geq||\phi||$. We denote $\mathbf{o}_{t_1,t^\prime}$ as a prefix of $\mathbf{o}_{t_1,t_2}$ if $\forall t\in[t_1,t^\prime],\mathbf{o}_{t_1,t_2}(t)=\mathbf{o}_{t_1,t^\prime}(t)$; consequently we define a set of all possible completions of a prefix as $\mathfrak{C}:=\{\mathbf{o}_{t_1,t_2} \mid \mathbf{o}_{t_1,t^\prime} \text{ is a prefix of } \mathbf{o}_{t_1,t_2}\}$ 
\end{definition}

\begin{definition} [Arithmetics on interval semantics] 
    Consider the following set of intervals $\mathbf{I}:=\{I_i\}^N_{i=1}$, where $I_i:=[\underline{I}_i,\bar{I}_i]$ and $\underline{I}_i\leq\bar{I}_i$. We define following arithmetics over $\mathbf{I}$
    \begin{equation}
        \begin{aligned}\label{eq:AGM_interval_Arithm}
            \mathrm{\mathbf{AGM}}_{\vee}(\mathbf{I}):= [\mathrm{AGM}_{\vee}(\underline{I}_1,\dots,\underline{I}_N),\mathrm{AGM}_{\vee}(\bar{I}_1,\dots,\bar{I}_N)],\\
            \mathrm{\mathbf{AGM}}_{\wedge}(\mathbf{I}):= [\mathrm{AGM}_{\wedge}(\underline{I}_1,\dots,\underline{I}_N),\mathrm{AGM}_{\wedge}(\bar{I}_1,\dots,\bar{I}_N)].
        \end{aligned}
    \end{equation}
    \begin{equation}
        \begin{aligned}\label{eq:minmax_interval_Arithm}
             \mathrm{\mathbf{max}}(\mathbf{I}):= [ \max(\underline{I}_1,\dots,\underline{I}_N),\max(\bar{I}_1,\dots,\bar{I}_N)],\\
             \mathrm{\mathbf{min}}(\mathbf{I}):= [ \min(\underline{I}_1,\dots,\underline{I}_N), \min(\bar{I}_1,\dots,\bar{I}_N)].
        \end{aligned}
    \end{equation}
\end{definition}

\smallskip
The singleton interval $[I, I]$ is denoted by $\{I\}$.

    
Before introducing the recursive definition of $[\rho]$ and $[\eta]$ we introduce the following definition. 
\begin{definition}(Compact Representation of Set of Intervals)
    Given words $\mathbf{o}_{t_1,t^\prime}$, $\mathbf{o}_{t,t^\prime}$, $t\in[t_a,t_b]$, and formulae $\phi$ and $\phi_i$, $i=1,\dots,N$, for $[\rho]$ and $[\eta]$ we denote the following set representation:  
     \begin{equation}
        \begin{aligned}
            [.]^{i=1,\dots,N}(\mathbf{o}_{t_1,t^\prime},\phi_i)=\{[.](\mathbf{o}_{t_1,t^\prime},\phi_1),\dots,[.](\mathbf{o}_{t_1,t^\prime},\phi_N)\},\\
            [.]_{t_a:t_b}(\mathbf{o}_{t,t^\prime},\phi)=\{[.](\mathbf{o}_{t_a,t^\prime},\phi_1),\dots,[.](\mathbf{o}_{t_b,t^\prime},\phi)\},
        \end{aligned}
    \end{equation}
where $[.](\mathbf{o_{t_1,t_2}},\phi)$ is the interval semantics defined next. 
\end{definition}

Consider a TWTL formula $\phi$ with time horizon $||\phi||$. 
Given a partial word $\mathbf{o}_{t_1,t^\prime}$, where $t^\prime\leq||\phi||$, for monitoring the robustness $\rho$ at time $t^\prime$, we define the robustness interval $[\rho]$ recursively as follows. 

\begin{flalign}
\begin{aligned}\label{eq:rho_mntring}
    & [\rho](\mathbf{o}_{t_1,t_2},H^{d}\pi_{A}) :=\begin{cases}
    \{\rho(\mathbf{o}_{t_1,t_2},H^{d}\pi_{A})\}; \ (t_2-t_1\geq d)\\
    [\rho_\bot,\min\limits_{t\in[t_1,d+t_1]}h(o_t)]; \ \text{otherwise}
    \end{cases}\\
    &[\rho](\mathbf{o}_{t_1,t_2},\phi_1\wedge\phi_2):=\mathrm{\mathbf{min}}( [\rho](\mathbf{o}_{t_1,t_2},\phi_1),[\rho](\mathbf{o}_{t_1,t_2},\phi_2))\\
     &[\rho](\mathbf{o}_{t_1,t_2},\phi_1\vee\phi_2):=\mathrm{\mathbf{max}}([\rho](\mathbf{o}_{t_1,t_2},\phi_1),[\rho](\mathbf{o}_{t_1,t_2},\phi_2))          \\
     & [\rho](\mathbf{o}_{t_1,t_2},\phi_1\cdot\phi_2) :=\\&  \mathrm{\mathbf{max}}_{t\in [t_1, t_2)} ( \mathrm{\mathbf{min}}([\rho](\mathbf{o}_{t_1,t},\phi_1),[\rho](\mathbf{o}_{t+1,t_2},\phi_2) ) )\\
    &[\rho](\mathbf{o}_{t_1,t_2},[\phi]^{[a,b]}) :=\begin{cases}
    \{\rho(\mathbf{o}_{t_1,t_2},[\phi]^{[a,b]}\}; \ (t_2-t_1\geq b)\\
    [\max\limits_{t\geq t_1+a}\{\rho(\mathbf{o}_{t,t_1+b},\phi)\},\rho_\top];\;\\ \qquad \text{otherwise}&
    \end{cases}
\end{aligned}
\end{flalign}

Considering the same specification and word, for monitoring the AGM robustness $\eta$ at time $t^\prime$, we define the robustness interval $[\eta]$, recursively, using (\ref{eq:AGM_runtime_mntr1}),(\ref{eq:AGM_runtime_mntr2}). 
\begin{flalign}
\begin{aligned}\label{eq:AGM_runtime_mntr1}
    &[\eta](\mathbf{o}_{t_1,t^\prime},\phi_1\wedge\phi_2,t^\prime):=  \mathrm{\mathbf{AGM}}_{\wedge}([\eta](\mathbf{o}_{t_1,t^\prime},\phi_i)^{i=1,2})\\
    &[\eta](\mathbf{o}_{t_1,t^\prime},\phi_1\vee\phi_2,t^\prime):=  \mathrm{\mathbf{AGM}}_{\vee}([\eta](\mathbf{o}_{t_1,t^\prime},\phi_i)^{i=1,2})\\
     &[\eta](\mathbf{o}_{t_1,t^\prime},H^d \pi_{A},t^\prime):=[\underline{\eta},\bar{\eta}]
\end{aligned}
\end{flalign}
\begin{strip}
\begin{flalign}
\begin{aligned}\label{eq:AGM_runtime_mntr2}
    &\bar{\eta}:= \begin{cases}
    {}^{d+1}\sqrt{\prod\limits_{t\in[t_1,t^\prime]}(1+\eta(\mathbf{o}_{t},\pi_A))(1+\eta_{\mathrm{max}})^{|[t^\prime,d]|}}-1 ; &\text{if } \forall t\in [t_1,t^\prime], \eta(\mathbf{o}_{t},\pi_{A})>0 \wedge (t^\prime-t_1)<d \\
      \frac{1}{d+1}\sum\limits_{t\in[t_1,t^\prime]}[\eta(\mathbf{o}_t,\pi_A)]_\_; &\text{ if }\exists t\in [t_1,t^\prime], \eta(\mathbf{o}_{t},\pi_{A})<0\wedge (t^\prime-t_1)<d \\
      \eta(\mathbf{o}_{t_1,t^\prime},H^d\pi_{A});&\text{If }(t^\prime-t_1)\geq d
    \end{cases} \\
    &\underline{\eta}:=\begin{cases}
    \frac{|(t^\prime,d]|}{|[t_1,d]|}\eta_{\mathrm{min}}; &\text{if } \forall t\in [t_1,t^\prime], \eta(\mathbf{o}_{t},\pi_{A})>0 \wedge (t^\prime-t_1)<d \\
    \frac{1}{d+1}\left(\sum\limits_{t\in[t_1,t^\prime]}[\eta(\mathbf{o}_t,\pi_A)]_\_+|[t^\prime,d]|\eta_{\mathrm{min}}\right); &\text{if }\exists t\in [t_1,t^\prime], \eta(\mathbf{o}_{t},\pi_{A})<0 \wedge (t^\prime-t_1)<d\\
      \eta(\mathbf{o}_{t_1,t^\prime},H^d\pi_{A});&\text{If }(t^\prime-t_1)\geq d
    \end{cases}\\
    &[\eta](\mathbf{o}_{t_1,t^\prime},[\phi]^{[a,b]},t^\prime):=\begin{cases}
    \{\eta(\mathbf{o}_{t_1,t^\prime},[\phi]^{[a,b]})\};&\text{if }(t^\prime-t_1)\geq b\\
     \mathrm{\mathbf{AGM}}_{\vee}([\eta](\mathbf{o}_{t,t_1+b},\phi)_{t_1+a:t_2}); &\text{otherwise}\end{cases}\\
&[\eta](\mathbf{o}_{t_1,t^\prime},\phi_1.\phi_2,t^\prime):=  \mathrm{\mathbf{AGM}}_{\vee}( \mathrm{\mathbf{AGM}}_{\wedge}([\eta](\mathbf{o}_{t_1,t^\prime-\Delta t},\phi_1),[\eta](\mathbf{o}_{t^\prime,t^\prime},\phi_2)),\dots, \mathrm{\mathbf{AGM}}_{\wedge}([\eta](\mathbf{o}_{t_1,t_1},\phi_1),[\eta](\mathbf{o}_{t_1+\Delta t,t^\prime},\phi_2)))
\end{aligned}
\end{flalign}
\end{strip}


\begin{theorem}\label{thm:soundness_of_monitors}
Consider a TWTL formula $\phi$ and a partial word $\mathbf{o}_{t_1,t^\prime}$, then for any possible completion word $\mathbf{o}_{t_1,t_2}\in\mathfrak{C}$, $\rho(\mathbf{o}_{t_1,t_2},H^d\pi_{A})\in[\rho](\mathbf{o}_{t_1,t^\prime},H^d\pi_{A})$ and $\eta(\mathbf{o}_{t_1,t_2},H^d\pi_{A})\in[\eta](\mathbf{o}_{t_1,t^\prime},H^d\pi_{A})$.
\end{theorem}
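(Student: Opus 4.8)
The plan is to prove soundness of the monitoring intervals by showing that for the \emph{Hold} base case, every completion's true robustness is bracketed by the endpoints of $[\rho]$ and $[\eta]$. The core idea is that the interval endpoints are constructed precisely as the infimum and supremum of the achievable robustness over the set of completions $\mathfrak{C}$, given that the prefix $\mathbf{o}_{t_1,t^\prime}$ is already fixed. I would split the argument into two regimes: the case $(t^\prime - t_1) \geq d$, where the Hold operator's time window has already fully elapsed and the interval collapses to a singleton, and the case $(t^\prime - t_1) < d$, where future (unobserved) outputs can still influence the value.

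For the first regime, $(t^\prime-t_1)\geq d$, observe that both $\rho(\mathbf{o}_{t_1,t_2},H^d\pi_A)=\min_{t\in[t_1,d+t_1]} h(o_t)$ and $\eta(\mathbf{o}_{t_1,t_2},H^d\pi_A)$ depend only on outputs $o_t$ for $t\in[t_1,t_1+d\Delta t]$, which are all contained in the observed prefix. Hence the true value is already determined and independent of the particular completion, so the singleton interval $\{\rho(\mathbf{o}_{t_1,t_2},H^d\pi_A)\}$ (resp. $\{\eta(\cdots)\}$) trivially contains it. This case is immediate.

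For the second regime, $(t^\prime-t_1)<d$, I would argue via the monotonicity of the $\min$ (for $\rho$) and of the $\mathrm{AGM}_\wedge$ (for $\eta$) functions in each of their unobserved arguments. For $\rho$, the best completion sets every future output to the maximal normalized value $\rho_\top$ and the worst sets them so that the minimum drops to $\rho_\bot$; this yields exactly the lower endpoint $\rho_\bot$ and upper endpoint $\min_{t\in[t_1,d+t_1]}h(o_t)$ over the observed portion, matching~\eqref{eq:rho_mntring}. For $\eta$, I would substitute $\eta_{\mathrm{max}}$ and $\eta_{\mathrm{min}}$ (the extremal per-step AGM valuations) into the $d+1$ remaining factors of the AGM product; because $\mathrm{AGM}_\wedge$ is monotone increasing in each argument, plugging in these extrema gives the tightest achievable upper bound $\bar\eta$ and lower bound $\underline\eta$. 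The two sub-cases in~\eqref{eq:AGM_runtime_mntr2} (all observed $\eta(\mathbf{o}_t,\pi_A)>0$ versus some $\eta(\mathbf{o}_t,\pi_A)<0$) correspond exactly to which branch of~\eqref{eq:AGM_con} the true $\eta$ will fall into for the extremal completions, which is why the formula switches between the geometric-mean form and the arithmetic $[\cdot]_\_$ form.

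The main obstacle I anticipate is the $\eta$ case, specifically justifying that the branch of $\mathrm{AGM}_\wedge$ cannot ``flip'' unfavorably under completion in a way that escapes the interval. When some observed valuation is already negative, the true $\eta$ is locked into the arithmetic branch $\frac{1}{d+1}\sum [\eta]_\_$ regardless of future outputs, so $\bar\eta$ need only account for no additional negative contributions (future terms contribute $0$ to $[\cdot]_\_$) while $\underline\eta$ adds the maximal negative mass $|[t^\prime,d]|\eta_{\mathrm{min}}$; verifying these endpoints are genuinely extremal requires checking that $[\cdot]_\_$ is monotone and that no completion can switch the formula back to the geometric branch. I would handle this by a careful case analysis on the sign pattern of the completed word and invoke the per-step monotonicity of $\mathrm{AGM}_\wedge$ established implicitly in~\eqref{eq:AGM_con}. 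Once the Hold base case is secured, soundness for composite operators would follow by structural induction using the interval arithmetic of~\eqref{eq:AGM_interval_Arithm}--\eqref{eq:minmax_interval_Arithm}, exactly mirroring the inductive structure of Lemma~\ref{lema:soundness_rho} and Theorem~\ref{thm:AGMrobustness_soundness}.
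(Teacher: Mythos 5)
Your proposal is correct and follows the same overall architecture as the paper's proof: a structural induction on $\phi$ whose base case is \emph{Hold}, split on whether the hold window has already elapsed ($t^\prime-t_1\geq d$, singleton interval) or not, with the \emph{within} case handled analogously, the \emph{conjunction}/\emph{disjunction} cases discharged by the interval arithmetic of \eqref{eq:minmax_interval_Arithm} and \eqref{eq:AGM_interval_Arithm}, and \emph{concatenation} reduced to those via \eqref{eq:disjConjunCat}. The one substantive difference is in how the \emph{Hold} endpoints are justified: the paper phrases this step as a proof by contradiction that ultimately just appeals to Definitions \ref{def:traditional_robustness} and \ref{def:AGM_robustness} (``by Definition, $\rho(\mathbf{o}_{t_1,t_2},H^d\pi_A)>\rho_\bot$ and $\leq\min_t h(o_t)$, contradiction''), whereas you argue directly that the endpoints of \eqref{eq:rho_mntring} and \eqref{eq:AGM_runtime_mntr2} are the extremal achievable values over $\mathfrak{C}$, via monotonicity of $\min$ and of $\mathrm{AGM}_\wedge$ in the unobserved arguments and an explicit analysis of when a completion can or cannot switch the branch of \eqref{eq:AGM_con}. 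That branch analysis (observed negative valuations lock the completion into the arithmetic $[\cdot]_\_$ branch; an all-positive prefix can still be pushed into it by a single bad future sample, which is exactly what the lower endpoint $\tfrac{|(t^\prime,d]|}{|[t_1,d]|}\eta_{\mathrm{min}}$ accounts for) is precisely the content the paper's contradiction argument leaves implicit, so your version is, if anything, the more complete one; the paper's buys brevity at the cost of not actually exhibiting why the stated intervals bracket every completion.
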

\begin{proof}
\textbf{First}, we prove that for any $\mathbf{o}_{t_1,t_2}\in\mathfrak{C}$, $\rho(\mathbf{o}_{t_1,t_2},H^d\pi_{A})\in[\rho](\mathbf{o}_{t_1,t^\prime},H^d\pi_{A})$ by structural induction over formula $\phi$. The base case for TWTL is the \textit{Hold} case, which we prove as follows.

\noindent \textit{Hold:} We prove the following cases by contradiction. Given partial word $\mathbf{o}_{t_1,t^\prime}$, then for any possible completion $\mathbf{o}_{t_1,t_2}\in\mathfrak{C}$, we assume that $\rho(\mathbf{o}_{t_1,t_2},H^d\pi_{A})\not\in[\rho](\mathbf{o}_{t_1,t^\prime},H^d\pi_{A})$.
First, if $t^\prime-t_1\geq d$, using \eqref{eq:rho_mntring},
$[\rho](\mathbf{o}_{t_1,t^\prime},H^d\pi_{A})=\{\rho(\mathbf{o}_{t_1,t_2},H^d\pi_{A}\}$ which is a contradiction.
Thus, $\rho(\mathbf{o}_{t_1,t_2},H^d\pi_{A})\in[\rho](\mathbf{o}_{t_1,t^\prime},H^d\pi_{A})$.
Second, if $t^\prime-t_1<d$, using \eqref{eq:rho_mntring}, $[\rho](\mathbf{o}_{t_1,t^\prime},H^d\pi_{A})= [\rho_\bot,\min\limits_{t\in[t_1,d+t_1]}h(o_t)]$, then by Definition \ref{def:traditional_robustness}, $\rho(\mathbf{o}_{t_1,t_2},H^d\pi_{A})>\rho_\bot$ and $\rho(\mathbf{o}_{t_1,t_2},H^d\pi_{A})\leq\min\limits_{t\in[t_1,d+t_1]}h(o_t)$ which is a contradiction.
Hence, $\rho(\mathbf{o}_{t_1,t_2},H^d\pi_{A})\in[\rho](\mathbf{o}_{t_1,t^\prime},H^d\pi_{A})$.

Then we have the following induction cases:

\noindent \textit{Within:} Follows similarly to the \textit{hold} case.

\noindent \textit{Conjunction/Disjunction:} The soundness follows by the properties of the interval arithmetic \eqref{eq:minmax_interval_Arithm}.

\noindent \textit{Concatenation:} Given the Boolean semantic~\eqref{eq:disjConjunCat}, the soundness of concatenation follows directly from the soundness of the \textit{Conjunction/Disjunction} case.

\textbf{Second,}  we prove that for any $\mathbf{o}_{t_1,t_2}\in\mathfrak{C}$, $\eta(\mathbf{o}_{t_1,t_2},H^d\pi_{A})\in[\eta](\mathbf{o}_{t_1,t^\prime},H^d\pi_{A})$ by structural induction over formula $\phi$. The base case for TWTL is the \textit{Hold} case, which we prove as follows.

\noindent \textit{Hold:} We prove the following cases by contradiction.
Given partial word $\mathbf{o}_{t_1,t^\prime}$, then for any possible completion $\mathbf{o}_{t_1,t_2}\in\mathfrak{C}$, we assume that $\eta(\mathbf{o}_{t_1,t_2},H^d\pi_{A})\not\in[\eta](\mathbf{o}_{t_1,t^\prime},H^d\pi_{A})$.
First, if $t^\prime-t_1\geq d$, using \eqref{eq:AGM_runtime_mntr2},
$[\eta](\mathbf{o}_{t_1,t^\prime},H^d\pi_{A}) = \{\eta(\mathbf{o}_{t_1,t_2},H^d\pi_{A})\}$ which is a contradiction.
Thus, $\eta(\mathbf{o}_{t_1,t_2},H^d\pi_{A}) \in [\eta](\mathbf{o}_{t_1,t^\prime},H^d\pi_{A})$.
Second, if $\forall t\in [t_1,t^\prime], \eta(\mathbf{o}_{t},\pi_{A})>0 \wedge (t^\prime-t_1)<d$, using \eqref{eq:AGM_runtime_mntr2},
{\small $[\eta](\mathbf{o}_{t_1,t^\prime},H^d\pi_{A})= \left[ \frac{|(t^\prime,d]|}{|[t_1,d]|}\eta_{\mathrm{min}},    {}^{d+1}\sqrt{\prod\limits_{t\in[t_1,t^\prime]}(1+\eta(\mathbf{o}_{t},\pi_A))(1+\eta_{\mathrm{max}})^{|[t^\prime,d]|}}-1\right]$}.
Then by Definition \ref{def:AGM_robustness}, $\eta(\mathbf{o}_{t_1,t_2},H^d\pi_{A})\in[\eta](\mathbf{o}_{t_1,t^\prime},H^d\pi_{A})$ which is a contradiction. 
Hence, $\eta(\mathbf{o}_{t_1,t_2},H^d\pi_{A})\in[\eta](\mathbf{o}_{t_1,t^\prime},H^d\pi_{A})$.
Third, if $\exists t\in [t_1,t^\prime], \eta(\mathbf{o}_{t},\pi_{A})<0 \wedge (t^\prime-t_1)<d$, using \eqref{eq:AGM_runtime_mntr2},
{\scriptsize $[\eta](\mathbf{o}_{t_1,t^\prime},H^d\pi_{A})= \left[ \frac{1}{d+1}\left(\sum\limits_{t\in[t_1,t^\prime]}[\eta(\mathbf{o}_t,\pi_A)]_\_+|[t^\prime,d]|\eta_{\mathrm{min}}\right) ,\frac{1}{d+1}\sum\limits_{t\in[t_1,t^\prime]}[\eta(\mathbf{o}_t,\pi_A)]_\_ \right]$}.
Then by Definition \ref{def:AGM_robustness}, $\eta(\mathbf{o}_{t_1,t_2},H^d\pi_{A})\in[\eta](\mathbf{o}_{t_1,t^\prime},H^d\pi_{A})$ which is a contradiction. 
Hence, $\eta(\mathbf{o}_{t_1,t_2},H^d\pi_{A})\in[\eta](\mathbf{o}_{t_1,t^\prime},H^d\pi_{A})$.

Then we have the following induction cases:

\noindent \textit{Within:} Follows similarly to the \textit{hold} case.

\noindent \textit{Conjunction/Disjunction:} The soundness follows by the properties of the interval arithmetic \ref{eq:AGM_interval_Arithm}.

\noindent \textit{Concatenation:} Given the Boolean semantic (\ref{eq:disjConjunCat}), the soundness of follows directly from the soundness of the \textit{Conjunction/Disjunction} case.
\end{proof}
\begin{lemma}(Convergence of Robustness Intervals)\label{lemma:MonitrosConvergance}
    Given a TWTL formula $\phi$ and word $\mathbf{o}_{t_1,t_2}$, where $t_2\geq||\phi||$; and for partial words $\mathbf{o}_{t_1,t_1^\prime}$ and $\mathbf{o}_{t_1,t_2^\prime}$, where $t_1^\prime<t_2^\prime<t_2$, the following set inclusions hold: $[\rho](\mathbf{o}_{t_1,t_1^\prime})\subseteq[\rho](\mathbf{o}_{t_1,t_2^\prime})$ and $[\eta](\mathbf{o}_{t_1,t_1^\prime})\subseteq[\eta](\mathbf{o}_{t_1,t_2^\prime})$. For $||\phi||\leq t^\prime\leq t_2$ the robustness intervals converge to a singleton which is the true robustness values, i.e.,
    $[\rho](\mathbf{o}_{t_1,t^\prime})= \{\rho(\mathbf{o}_{t_1,t_2})\}$ and
    $[\eta](\mathbf{o}_{t_1,t^\prime})= \{\eta(\mathbf{o}_{t_1,t_2})\}$.
\end{lemma}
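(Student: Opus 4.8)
The plan is to prove the two assertions of Lemma~\ref{lemma:MonitrosConvergance} separately, handling the monotone shrinking of intervals (the set inclusions) by structural induction on $\phi$, and then deriving the convergence-to-a-singleton claim as a consequence of the $t^\prime \geq \|\phi\|$ branches of the interval definitions. First I would observe that the key monotonicity fact lives at the base case (the \emph{Hold} operator) and propagates upward through the recursive interval arithmetic. Concretely, for $H^d\pi_A$ with $t_1^\prime < t_2^\prime$, both still in the ``otherwise'' regime $t^\prime - t_1 < d$, I would show that the longer prefix $\mathbf{o}_{t_1,t_2^\prime}$ fixes strictly more valuations $\eta(\mathbf{o}_t,\pi_A)$ than $\mathbf{o}_{t_1,t_1^\prime}$, so that in the interval for $[\eta]$ the upper endpoint $\bar\eta$ can only decrease (each newly revealed factor $(1+\eta(\mathbf{o}_t,\pi_A))$ replaces an optimistic $(1+\eta_{\mathrm{max}})$) and the lower endpoint $\underline\eta$ can only increase (each revealed summand replaces a pessimistic $\eta_{\mathrm{min}}$ term), whence $[\eta](\mathbf{o}_{t_1,t_2^\prime}) \subseteq [\eta](\mathbf{o}_{t_1,t_1^\prime})$. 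The analogous statement for $[\rho]$ is simpler: in the ``otherwise'' branch of \eqref{eq:rho_mntring} the lower endpoint is the constant $\rho_\bot$ and the upper endpoint $\min_{t\in[t_1,d+t_1]}h(o_t)$ is a minimum over more revealed samples as the prefix grows, hence nonincreasing, giving the nested inclusion.

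Next I would push these base-case inclusions through the inductive cases. The crucial lemma here is that all the interval operators used in \eqref{eq:rho_mntring}, \eqref{eq:AGM_runtime_mntr1}, and \eqref{eq:AGM_runtime_mntr2}---namely $\mathbf{max}$, $\mathbf{min}$, $\mathbf{AGM}_\vee$, and $\mathbf{AGM}_\wedge$ from \eqref{eq:AGM_interval_Arithm} and \eqref{eq:minmax_interval_Arithm}---are \emph{inclusion-monotone}: if each input interval of one tuple is contained in the corresponding input interval of another, then the output interval is contained in the output interval. This holds because each such operator applies the same scalar monotone map ($\max$, $\min$, or the respective AGM function, all nondecreasing in each coordinate) separately to the lower endpoints and to the upper endpoints. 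Granting this, the \emph{Conjunction}, \emph{Disjunction}, \emph{Within}, and \emph{Concatenation} cases follow immediately from the induction hypothesis applied to the subformulae together with inclusion-monotonicity, exactly as in the soundness argument of Theorem~\ref{thm:soundness_of_monitors}.

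For the convergence claim, I would note that once $t^\prime \geq \|\phi\|$ every subformula has its defining time window fully contained in the revealed prefix. At the base cases this is exactly the $(t^\prime - t_1) \geq d$ branch of $[\rho]$ and $[\eta]$, which returns the singleton $\{\rho(\mathbf{o}_{t_1,t_2},H^d\pi_A)\}$ respectively $\{\eta(\mathbf{o}_{t_1,t_2},H^d\pi_A)\}$; since the prefix already determines all relevant $o_t$, this singleton equals the true value on any completion $\mathbf{o}_{t_1,t_2}\in\mathfrak{C}$. Propagating singletons upward, I would use the fact that each interval operator maps singletons to singletons (a singleton input has equal endpoints, and applying the same scalar map to both gives equal endpoints) and that on singletons the interval operators reduce to their scalar counterparts from Definitions~\ref{def:traditional_robustness} and~\ref{def:AGM_robustness}. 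By the same structural induction, $[\rho](\mathbf{o}_{t_1,t^\prime}) = \{\rho(\mathbf{o}_{t_1,t_2})\}$ and $[\eta](\mathbf{o}_{t_1,t^\prime}) = \{\eta(\mathbf{o}_{t_1,t_2})\}$ for all $t^\prime \geq \|\phi\|$.

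The main obstacle I expect is the base-case monotonicity verification for $[\eta]$ in the \emph{Hold} operator, where one must check that growing the prefix shrinks the interval across the possible transition between the two ``otherwise'' regimes---specifically, when the extra revealed sample $\eta(\mathbf{o}_t,\pi_A)$ turns negative and the formula switches from the geometric-mean branch of $\bar\eta$ to the arithmetic branch $\tfrac{1}{d+1}\sum [\eta(\mathbf{o}_t,\pi_A)]_\_$. One must confirm that the endpoint bookkeeping ($\eta_{\mathrm{max}}$, $\eta_{\mathrm{min}}$, and the $[\cdot]_\_$ truncations) is consistent across that switch so that the nested inclusion is never violated; this requires comparing the two branch formulae at the boundary, which is the only genuinely delicate computation in the argument.
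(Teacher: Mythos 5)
The paper gives no proof of Lemma~\ref{lemma:MonitrosConvergance} --- the statement is left unproved --- so your sketch cannot be compared against an ``official'' argument; it can only be judged on its own. Its overall shape (structural induction, monotone base case at \emph{Hold}, inclusion-monotonicity of the interval operators, singleton propagation once $t^\prime\geq\|\phi\|$) is the natural one, and your treatment of the \emph{Hold} base case and of the convergence claim is essentially right. But note first that the inclusion you actually derive, $[\eta](\mathbf{o}_{t_1,t_2^\prime})\subseteq[\eta](\mathbf{o}_{t_1,t_1^\prime})$, is the \emph{reverse} of what the lemma literally asserts ($[\rho](\mathbf{o}_{t_1,t_1^\prime})\subseteq[\rho](\mathbf{o}_{t_1,t_2^\prime})$). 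Your direction is the only one compatible with convergence to a singleton, so the lemma's subscripts are almost certainly transposed; you should say so explicitly rather than silently proving the opposite containment.

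The genuine gap is in the inductive step for \emph{within} and \emph{concatenation}. Inclusion-monotonicity of $\mathbf{min}$, $\mathbf{max}$, $\mathbf{AGM}_\vee$, $\mathbf{AGM}_\wedge$ as you state it applies the same operator to two tuples of intervals \emph{of the same arity}, with each input of one tuple contained in the corresponding input of the other. In \eqref{eq:rho_mntring} and \eqref{eq:AGM_runtime_mntr2}, however, the index set over which $\mathbf{max}_{t}$ or $\mathbf{AGM}_\vee$ ranges grows as the prefix grows: a longer prefix reveals new split points $t$ for $\phi_1\cdot\phi_2$ and new candidate start times for $[\phi]^{[a,b]}$, so the two applications being compared have different numbers of arguments and your monotonicity lemma does not apply. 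For $\mathbf{max}$ one can still argue that each newly admissible term was implicitly dominated by the old upper endpoint (its $\phi_2$-interval at the shorter prefix was a degenerate or maximally wide interval), but for $\mathbf{AGM}_\vee$ --- an \emph{average}, not a maximum --- appending an argument can move the value in either direction, so nestedness of $[\eta]$ under concatenation requires its own computation and is not ``immediate from the induction hypothesis.'' Together with the branch-switch verification at \emph{Hold} that you flag but do not carry out, this leaves the $[\eta]$ half of the first assertion unproved as written.
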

\begin{example}(Continued)\label{eg:monitors}
    Given the TWTL unnormalized formula $\phi= [H^6 \pi_A]^{[0,10]}$ with time horizon $||\phi||=10$, we demonstrate the proposed runtime monitors by observing the robustness intervals $[\rho]$ and $[\eta]$ of partial words of $\mathbf{o}_1$, $\mathbf{o}_2$, and $\mathbf{o}_3$ (see the top-left figure of Fig.\ref{fig:allmntringrhoResults}). Consider the time series $\tau = \{0,\dots,10\}$.
    For each partial word we compute $[\rho]$ and $[\eta]$ at every $t\in\tau$ as we the partial words $\mathbf{o}_1(t)$, $\mathbf{o}_2(t)$, and $\mathbf{o}_3(t)$ become available.
    The evolution of the intervals $[\rho]$ and $[\eta]$ for $\mathbf{o}_1$, $\mathbf{o}_2$, and $\mathbf{o}_3$ are depicted in the top-right, bottom-left, and bottom-right figures of Fig. \ref{fig:allmntringrhoResults}, respectively, where the evolution of $[\rho]$ is depicted in dashed-lines with triangles and the evolution of $[\eta]$ is depicted in solid-lines with circles. Notice how the intervals become tighter as the partial word grows, where eventually when $t^\prime=||\phi||$, $[\rho]$ and $[\eta]$ converge to the true $\rho$ and $\eta$ values, respectively. In this example we consider $\rho_\top=10$ and $\rho_\bot=-10$; one can notice how the evolution of $[\eta]$ is smoother than the evolution of $[\rho]$, due to using the AGM in the computation of $\eta$.
     Note that we normalize the TWTL formula before computing $\eta$ robustness values.
    Thus, in Fig.~\ref{fig:allmntringrhoResults}, $\eta$ stays within $[-1, 1]$.
\end{example}

\begin{figure}\centering
\includegraphics[width=0.95\linewidth]{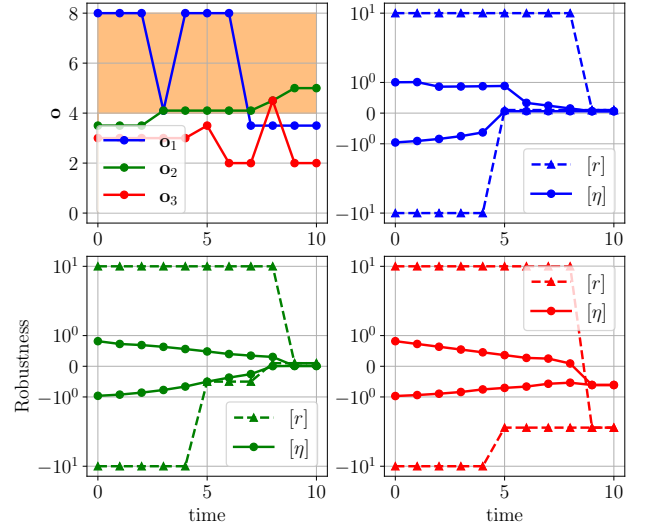}
	\caption{Demonstration of TWTL robustness and monitoring. (top-left) Depiction of the valuation of words $\mathbf{o}_1$, $\mathbf{o}_2$, and $\mathbf{o}_3$ w.r.t. time. The evolution of $[\rho]$ (dashed-line with triangles) and $[\eta]$ (solid-lines with circles) for words $\mathbf{o}_1$, $\mathbf{o}_2$ (bottom-left), and $\mathbf{o}_3$ are shown in the top-right, bottom-left, and bottom-right figures, respectively.} 
	\label{fig:allmntringrhoResults}
\end{figure}

\section{Numerical Example Case Study}\label{sec:example}
We demonstrate the proposed robustness semantics, by monitoring $\rho$ and $\eta$ for pre-computed runs of a simple planar robot system. Assume the time step, $\Delta t$, of the runs is $1$. We consider a simple sequential navigation task with deadlines and a safety requirement. In the following unnormalized TWTL formula, we encode the task that reads: \textit{Within time $0$ and $8$, visit region $A$ and stay there for $3$ time steps; right after that, within time $0$ and $10$, visit region $B$ and stay there for $4$ time steps; right after that, within time $0$ and $11$, visit region $C$ and stay there for $3$ time steps; and for all execution time avoid region $O$}. See the left figure of Fig. \ref{fig:caseStudy} for a depiction of the planar regions $A,B,C,$ and $O$. 
\begin{equation}\label{eq:phi_example}
    \phi = \left([H^4 \pi_A]^{[0,8]}\cdot[H^4 \pi_B]^{[0,10]}\cdot[H^3 \pi_C]^{[0,11]}\right)\wedge H^{50}\neg \pi_O 
\end{equation}
The atomic propositions $\pi_A,\pi_B,\pi_C$, and $\pi_O$ are defined as predicated regions over the $xy-\mathrm{plane}$; where $A:=\{(x,y)| 1\leq x\leq4 \wedge 1\leq y\leq4 \}$,
$B:=\{(x,y)| 8\leq x\leq11 \wedge 3\leq y\leq \}$,
$C:=\{(x,y)| 1\leq x\leq4 \wedge 9\leq y\leq12 \}$,
and $O:=\{(x,y)| 5\leq x\leq7 \wedge 5\leq x\leq7 \}$. 

We monitor two runs of the robot, $\mathbf{o}_1$ and $\mathbf{o}_2$, which are shown as the blue and green traces in the left figure of Fig. \ref{fig:caseStudy}, respectively. The robustness of $\mathbf{o}_1$ and $\mathbf{o}_2$ are $\rho(\mathbf{o}_1,\phi)=0.4$ and $\rho(\mathbf{o}_2,\phi)=0.3$, whereas their AGM robustness are $\eta(\mathbf{o}_1,\phi)=0.00076$ and $\eta(\mathbf{o}_2,\phi)=0.00015$. To monitor the robustness measures at runtime, consider the time series $\tau = \{2,10,15,20,25,30,35,40,42\}$.
For each partial word we compute $[\rho]$ and $[\eta]$ at every $t\in\tau$ as the partial words $\mathbf{o}_1(t)$ and $\mathbf{o}_2(t)$ become available. 
The valuations of the intervals $[\rho]$ and $[\eta]$ for $\mathbf{o}_1$, and $\mathbf{o}_2$ at every $t\in\tau$ are depicted in the middle, 
 and right figures of Fig. \ref{fig:caseStudy}, respectively, where $[\rho]$ is depicted in dashed-lines with triangles and the $[\eta]$ is depicted in solid-lines with circles. 

Observe that monitoring convergence of $[\eta]$ is smoother than the convergence of $[\rho]$, which would be more useful in some applications.
In our future work, we consider incremental planning for TWTL tasks for which we require runtime monitors to use as a heuristic to maximize the satisfaction of the task.
We find that aiming to maximize $\eta$ in planning applications would yield smoother paths, we leave the details of monitoring for planning applications for future work. 

\begin{figure*}\centering
\includegraphics[width=0.95\textwidth]{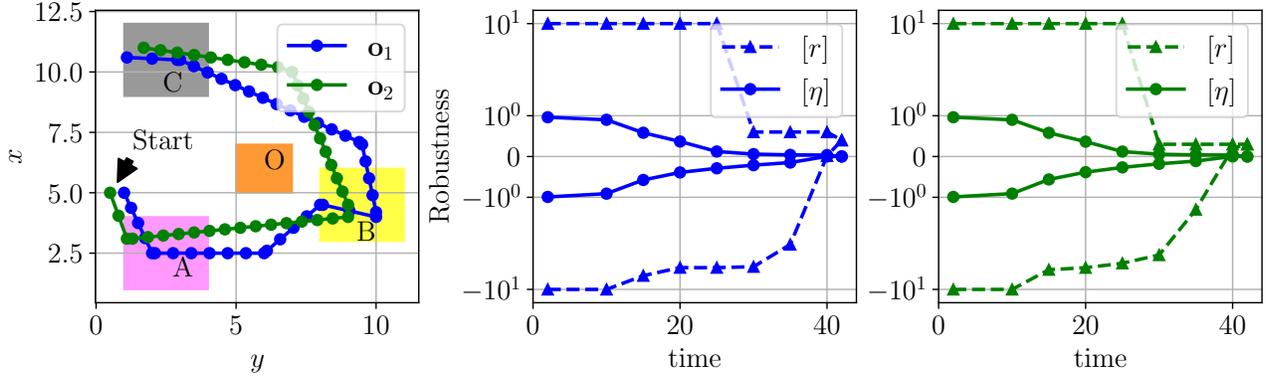}
	\caption{Demonstration of monitoring the TWTL robustness and AGM TWTL robustness of precomputed planar robot runs, $\mathbf{o}_1$ and $\mathbf{o}_1$, against satisfying formula (\ref{eq:phi_example}). Words $\mathbf{o}_1$ and $\mathbf{o}_2$ in an $xy-\mathrm{planar}$ environment are shown in the blue and green traces in the left figure, respectively. The valuations of the intervals $[\rho]$ and $[\eta]$ for $\mathbf{o}_1$, and $\mathbf{o}_2$ at every $t\in\{2,10,15,20,25,30,35,40,42\}$ are depicted in the middle, 
 and right figures, respectively, where $[\rho]$ is depicted in dashed-lines with triangles and the $[\eta]$ is depicted in solid-lines with circles. } 
	\label{fig:caseStudy}
\end{figure*}

\section{Conclusion and Future Work}\label{sec:conclusion}

Given the richness of Time Window Temporal Logic as a specification language for dynamical systems, we introduce two quantitative semantics to measure the robustness of TWTL formulae. In the first measure, which we call TWTL robustness, we introduce a distance measure between the system run and the formula satisfaction decision boundary based on the most critical values of the run. In the second measure, which we call AGM TWTL robustness, we quantify the satisfaction degree by another distance measure using the arithmetic and geometric mean of the system run values based on some rules that guarantee the soundness of the measure. In planning applications, AGM TWTL robustness enjoys a key advantage over the first one, in that it gives more reward to values that contribute to the formula satisfaction whereas TWTL robustness is dominated by the most critical values. We plan to demonstrate this advantage in future follow-up work. Given partial runs of the system, we develop runtime monitors that produce interval bounds on the quantitative semantics. We demonstrate the introduced quantitative semantics by monitoring TWTL robustness and AGM TWTL robustness of precomputed planar robot runs against some TWTL specifications.

\bibliographystyle{IEEEtran}
\bibliography{IEEEabrv,main_cdc}

\begin{thebibliography}{10}
\providecommand{\url}[1]{#1}
\csname url@rmstyle\endcsname
\providecommand{\newblock}{\relax}
\providecommand{\bibinfo}[2]{#2}
\providecommand\BIBentrySTDinterwordspacing{\spaceskip=0pt\relax}
\providecommand\BIBentryALTinterwordstretchfactor{4}
\providecommand\BIBentryALTinterwordspacing{\spaceskip=\fontdimen2\font plus
\BIBentryALTinterwordstretchfactor\fontdimen3\font minus
  \fontdimen4\font\relax}
\providecommand\BIBforeignlanguage[2]{{%
\expandafter\ifx\csname l@#1\endcsname\relax
\typeout{** WARNING: IEEEtran.bst: No hyphenation pattern has been}%
\typeout{** loaded for the language `#1'. Using the pattern for}%
\typeout{** the default language instead.}%
\else
\language=\csname l@#1\endcsname
\fi
#2}}

\bibitem{baier2008principles}
C.~Baier and J.-P. Katoen, \emph{Principles of model checking}.\hskip 1em plus
  0.5em minus 0.4em\relax MIT press, 2008.

\bibitem{pnueli1977LTL}
A.~Pnueli, ``The temporal logic of programs,'' in \emph{18th Annual Symposium
  on Foundations of Computer Science (sfcs 1977)}.\hskip 1em plus 0.5em minus
  0.4em\relax ieee, 1977, pp. 46--57.

\bibitem{karaman16_challengesLTL}
E.~Plaku and S.~Karaman, ``Motion planning with temporal-logic specifications:
  Progress and challenges,'' \emph{AI communications}, vol.~29, no.~1, pp.
  151--162, 2016.

\bibitem{hadas2018synthesis_review}
H.~Kress-Gazit, M.~Lahijanian, and V.~Raman, ``Synthesis for robots: Guarantees
  and feedback for robot behavior,'' \emph{Annual Review of Control, Robotics,
  and Autonomous Systems}, vol.~1, pp. 211--236, 2018.

\bibitem{Yannis_abstractionFree}
X.~Luo, Y.~Kantaros, and M.~M. Zavlanos, ``An abstraction-free method for
  multirobot temporal logic optimal control synthesis,'' \emph{IEEE
  Transactions on Robotics}, vol.~37, no.~5, pp. 1487--1507, 2021.

\bibitem{kantaros2020stylus}
Y.~Kantaros and M.~M. Zavlanos, ``Stylus*: A temporal logic optimal control
  synthesis algorithm for large-scale multi-robot systems,'' \emph{The
  International Journal of Robotics Research}, vol.~39, no.~7, pp. 812--836,
  2020.

\bibitem{belta2017formal}
C.~Belta, B.~Yordanov, and E.~A. Gol, \emph{Formal methods for discrete-time
  dynamical systems}.\hskip 1em plus 0.5em minus 0.4em\relax Springer, 2017,
  vol.~15.

\bibitem{Belta17_TLTL}
X.~Li, C.-I. Vasile, and C.~Belta, ``Reinforcement learning with temporal logic
  rewards,'' in \emph{2017 IEEE/RSJ International Conference on Intelligent
  Robots and Systems (IROS)}.\hskip 1em plus 0.5em minus 0.4em\relax IEEE,
  2017, pp. 3834--3839.

\bibitem{Hadas21EventBSTL1}
D.~Gundana and H.~Kress-Gazit, ``Event-based signal temporal logic synthesis
  for single and multi-robot tasks,'' \emph{IEEE Robotics and Automation
  Letters}, vol.~6, no.~2, pp. 3687--3694, 2021.

\bibitem{KaramanRRTstarIJRR}
\BIBentryALTinterwordspacing
S.~Karaman and E.~Frazzoli, ``Sampling-based algorithms for optimal motion
  planning,'' \emph{The International Journal of Robotics Research}, vol.~30,
  no.~7, pp. 846--894, 2011. [Online]. Available:
  \url{https://doi.org/10.1177/0278364911406761}
\BIBentrySTDinterwordspacing

\bibitem{Jana_scLTLRRG2021}
K.~Grover, F.~dos Santos~Barbosa, J.~Tumova, and J.~Kret{\i}nsky, ``Semantic
  abstraction-guided motion planningfor scltl missions in unknown
  environments,'' in \emph{Robotics: Science and Systems}, 2021.

\bibitem{Cristi2020RectiveRRG}
\BIBentryALTinterwordspacing
C.~I. Vasile, X.~Li, and C.~Belta, ``Reactive sampling-based path planning with
  temporal logic specifications,'' \emph{The International Journal of Robotics
  Research}, vol.~39, no.~8, pp. 1002--1028, 2020. [Online]. Available:
  \url{https://doi.org/10.1177/0278364920918919}
\BIBentrySTDinterwordspacing

\bibitem{maler2004STLpaper}
O.~Maler and D.~Nickovic, ``Monitoring temporal properties of continuous
  signals,'' in \emph{Formal Techniques, Modelling and Analysis of Timed and
  Fault-Tolerant Systems: Joint International Conferences on Formal Modeling
  and Analysis of Timed Systmes, FORMATS 2004, and Formal Techniques in
  Real-Time and Fault-Tolerant Systems, FTRTFT 2004, Grenoble, France,
  September 22-24, 2004. Proceedings}.\hskip 1em plus 0.5em minus 0.4em\relax
  Springer, 2004, pp. 152--166.

\bibitem{MTL1990}
R.~Koymans, ``Specifying real-time properties with metric temporal logic,''
  \emph{Real-time systems}, vol.~2, no.~4, pp. 255--299, 1990.

\bibitem{Cristi2017TWTL}
C.-I. Vasile, D.~Aksaray, and C.~Belta, ``Time window temporal logic,''
  \emph{Theoretical Computer Science}, vol. 691, pp. 27--54, 2017.

\bibitem{donze2010robustSTL}
A.~Donz{\'e} and O.~Maler, ``Robust satisfaction of temporal logic over
  real-valued signals,'' in \emph{Formal Modeling and Analysis of Timed
  Systems: 8th International Conference, FORMATS 2010, Klosterneuburg, Austria,
  September 8-10, 2010. Proceedings 8}.\hskip 1em plus 0.5em minus 0.4em\relax
  Springer, 2010, pp. 92--106.

\bibitem{fainekos2009MTLrobustness}
G.~E. Fainekos and G.~J. Pappas, ``Robustness of temporal logic specifications
  for continuous-time signals,'' \emph{Theoretical Computer Science}, vol. 410,
  no.~42, pp. 4262--4291, 2009.

\bibitem{Belta16_QLwithSTL}
D.~Aksaray, A.~Jones, Z.~Kong, M.~Schwager, and C.~Belta, ``Q-learning for
  robust satisfaction of signal temporal logic specifications,'' in \emph{2016
  IEEE 55th Conference on Decision and Control (CDC)}.\hskip 1em plus 0.5em
  minus 0.4em\relax IEEE, 2016, pp. 6565--6570.

\bibitem{Sadra_RobustSTL_MPC}
S.~Sadraddini and C.~Belta, ``Robust temporal logic model predictive control,''
  in \emph{2015 53rd Annual Allerton Conference on Communication, Control, and
  Computing (Allerton)}, 2015, pp. 772--779.

\bibitem{CristiKaraman17_STL_RRTstar}
C.-I. Vasile, V.~Raman, and S.~Karaman, ``Sampling-based synthesis of
  maximally-satisfying controllers for temporal logic specifications,'' in
  \emph{2017 IEEE/RSJ International Conference on Intelligent Robots and
  Systems (IROS)}, 2017, pp. 3840--3847.

\bibitem{Pavone2020STLCG}
K.~Leung, N.~Ar{\'e}chiga, and M.~Pavone, ``Backpropagation through signal
  temporal logic specifications: Infusing logical structure into gradient-based
  methods,'' \emph{The International Journal of Robotics Research}, p.
  02783649221082115, 2020.

\bibitem{jana2020sampling}
J.~Karlsson, F.~S. Barbosa, and J.~Tumova, ``Sampling-based motion planning
  with temporal logic missions and spatial preferences,''
  \emph{IFAC-PapersOnLine}, vol.~53, no.~2, pp. 15\,537--15\,543, 2020.

\bibitem{Lars21STLplnngMaxTrho_beforeAsyncTempRho}
A.~Rodionova, L.~Lindemann, M.~Morari, and G.~J. Pappas, ``Time-robust control
  for stl specifications,'' in \emph{2021 60th IEEE Conference on Decision and
  Control (CDC)}, 2021, pp. 572--579.

\bibitem{cristi2019AGMstl}
N.~Mehdipour, C.-I. Vasile, and C.~Belta, ``Arithmetic-geometric mean
  robustness for control from signal temporal logic specifications,'' in
  \emph{2019 American Control Conference (ACC)}.\hskip 1em plus 0.5em minus
  0.4em\relax IEEE, 2019, pp. 1690--1695.

\bibitem{Cristi2020_TWTLrrt}
F.~Penedo, C.-I. Vasile, and C.~Belta, ``Language-guided sampling-based
  planning using temporal relaxation,'' in \emph{Algorithmic Foundations of
  Robotics XII}.\hskip 1em plus 0.5em minus 0.4em\relax Springer, 2020, pp.
  128--143.

\bibitem{Seshia17_STLmtrng_robustSatInterv}
J.~V. Deshmukh, A.~Donz{\'e}, S.~Ghosh, X.~Jin, G.~Juniwal, and S.~A. Seshia,
  ``Robust online monitoring of signal temporal logic,'' \emph{Formal Methods
  in System Design}, vol.~51, no.~1, pp. 5--30, 2017.

\bibitem{jyo18SurveyMntring}
E.~Bartocci, J.~Deshmukh, A.~Donz{\'e}, G.~Fainekos, O.~Maler,
  D.~Ni{\v{c}}kovi{\'c}, and S.~Sankaranarayanan, ``Specification-based
  monitoring of cyber-physical systems: a survey on theory, tools and
  applications,'' \emph{Lectures on Runtime Verification: Introductory and
  Advanced Topics}, pp. 135--175, 2018.

\bibitem{b2022runtimeMonitForTWTL}
E.~Bonnah and K.~A. Hoque, ``Runtime monitoring of time window temporal
  logic,'' \emph{IEEE Robotics and Automation Letters}, vol.~7, no.~3, pp.
  5888--5895, 2022.

\end{thebibliography}

\end{document}

